\newtheorem{theorem}{Theorem}[section]
\newtheorem{corollary}[theorem]{Corollary}
\newtheorem{lemma}[theorem]{Lemma}
\newtheorem{remark}[theorem]{Remark}
\numberwithin{equation}{section}
\newcommand\R{{\mathbb R}}
\newcommand\X{{{\mathbb R}^d}}
\newcommand\N{{\mathbb N}}
\newcommand\F{{\mathcal F}}
\newcommand\M{{\mathcal M}}
\newcommand\B{{\mathcal B}}
\newcommand\Ga{\Gamma}
\newcommand\ga{\gamma}
\newcommand\La{\Lambda}
\newcommand\la{\lambda}
\newcommand\1{{1 \!\! 1}}
\newcommand\Ls{L_{\sigma}}
\newcommand\hLs{\hat{L}_{\sigma}}
\newcommand\hLsd{\hLs^*}
\newcommand\Lsm{L_{\sigma,\,m}}
\newcommand\hLsm{\hat{L}_{\sigma,\,m}}
\newcommand\hLsmd{\hLsm^*}
\title{Regulation mechanisms in spatial stochastic development models}
\author{Dmitri Finkelshtein\thanks{Institute of Mathematics,
National Academy of Sciences of Ukraine, 01601, Kyiv, Ukraine ({\tt
fdl@imath.kiev.ua}).} \and Yuri Kondratiev\thanks{Fakult\"{a}t
f\"{u}r Mathematik, Universit\"{a}t Bielefeld, 33615 Bielefeld,
Germany ({\tt kondrat@mathematik.uni-bielefeld.de}).}}
\date{}
\begin{document}

\maketitle

\begin{abstract}
The aim of this paper is to analyze different regulation mechanisms
in spatial continuous stochastic development models.  We describe
the density behavior  for  models with  global mortality  and local
establishment rates. We prove that the local self-regulation via a
competition mechanism (density dependent mortality) may suppress a
unbounded growth of the averaged density if the competition kernel
is superstable.
\end{abstract}

{\small {\bf Key words.} Continuous systems, spatial birth-and-death
processes, correlation functions, establishment, density dependent
mortality, development models}

{\small {\bf AMS subject classification (2000).} 82C22; 60K35;
82C21}

\pagestyle{myheadings} \thispagestyle{plain}
\markboth{D.~Finkelshtein, Yu.~Kondratiev}{Regulation mechanisms in
spatial stochastic development models}

\section{Introduction}

We will discuss some classes on interacting particle systems (IPS)
located in the Euclidean space $\X$. The phase space of such system
is the configuration space $\Gamma=\Gamma(\X)$ on $\X$. By
definition, each configuration $\gamma\in \Gamma$ is a locally
finite subset $\gamma\subset \X$. So, due to the standard
terminology, we will deal with continuous systems. Random evolutions
of IPS are given by Markov processes on $\Gamma$. Between all such
processes, one may distinguish a subclass of so-called spatial
birth-and-death Markov processes. In these processes points of a
randomly evolving configuration appear and disappear due to a Markov
rule (see (3.1) below).  Particular types of spatial birth-and-death
processes are motivated by several applications. For example,
Glauber type dynamics for classical continuous gases belongs to this
class \cite{BCC}, \cite{KoLy}. Another very essential source of such
processes is given by individual based models in spatial ecology or
agent based models in socio-economic systems, see, e.g.,
\cite{FKoKu} and the references therein. In any case, a concrete
form of birth and death rates in the stochastic dynamics should
reflect a microscopic structure of the system under consideration.

To describe the problem we are going to analyze, let us start with
the simplest case of a pure birth stochastic Markov process. In this
process, new points appear in the configuration independently of
existing points and locations of these new points are uniformly
distributed in the space. A possible interpretation of such random
evolution is related to an independent creation of identical
economic units in the space without any influence of their spatial
locations. We will call this process the free stochastic development
model. Another motivation to study such stochastic evolutions comes
from applications of the free development dynamics to generalized
mutation-selection models in mathematical genetics. In these models
a configuration describes locations of mutations inside of a genom
and new mutations spontaneously appear and are equally distributed
in the genom, see \cite{SEW}, \cite{KoMiPi}, \cite{KoKuOh}.
Obviously, this process is monotonically growing  and it is easy to
show that the density of particles in such a system will linearly
grows in time. We would like to answer the following question: How
may global regulations and local interactions change the asymptotic
behavior of the system? More precisely, we will consider three
particular cases of stochastic development models including:

(i) A global regulation via a mortality rate that prescribes to
particles (economic units)  random life times (exponentially i.i.d.
with a parameter $m>0$). This case corresponds to the well-known
Surgailis independent birth-and-death Markov processes on $\Gamma$,
see, e.g., \cite{Sur83}, \cite{Sur84}, \cite{KLR08}. In the
framework of mathematical physics, it is just Glauber dynamics for
classical free gas.

(ii) An establishment effect. In this case, the distribution of the
position of a new particle depends on the local structure of the
configuration. Newborn units will appear with small intensity in
densely occupied regions. We will see that the establishment itself
is not enough to prevent the growth of the density in the system. In
fact, the establishment effects  slower (logarithmic) growth
contrary to the linear growth in the free model.

(iii) A self-regulation via competition. The competition is
reflected in a density dependent mortality.  The latter means that
the mortality rate for each unit depends on the local structure of
the configuration around this unit. The mortality rate enters into
the model as a relative energy of a particle inside the
configuration corresponding  to a competition potential. The
described competition mechanism provides only a local regulation of
dense regions inside a configuration. Nevertheless, Theorem 4.2
shows a global bound for the averaged density of the stochastic
development process. Note that for the proof of this result we use
an assumption of positive definiteness (and, as a consequence,
superstability) of the competition potential in the form stated in
\cite{LPdS84}. Therefore, the main result concerning the competition
case may be read as follows: a properly organized competition in the
stochastic development systems produces a self-regulation for the
density of units.

Let us stress an essential point concerning the main aim of this
paper. At present, we have quite restrictive conditions for the
existence of general spatial birth-and-death processes, see e.g.
\cite{GarKur06}. In many  applications we need a weaker information.
Namely, we are interesting in  the existence of  Markov functions
corresponding to given birth and death rates and a certain class of
initial distributions on $\Gamma$. These Markov functions and their
one-dimensional distributions are enough to describe the time
evolution of initial states of systems and to analyze asymptotic
properties of stochastic dynamics (invariant states, ergodicity
etc.).  In the case of infinite particle systems, the concept of
Markov functions is strictly weaker than that of Markov processes,
and for particular models considered below there exist constructive
methods which solve the existence problem, see \cite{FKoKu},
\cite{GarKur06}, \cite{KoKutZh06} and \cite{KoKuMi}. But the  main
aim of our analysis is to obtain an {\it a priori\/} information
about the time-space behavior of such important characteristics of
these processes as correlations functions of their one-dimensional
distributions which are probability measures on $\Gamma$. In
particular, we are interested in the behavior of the particle
density in course of the stochastic evolution. A constructive
possibility to obtain some {\it a priori\/} bounds on
characteristics of Markov dynamics may be also realized in other
interesting IPS. As an example, we can mention the Dieckmann--Law
model in spatial ecology where the existence problem remains open
but conditions for explosion and non-explosion are stated in terms
of the parameters of the model in \cite{FK-DL}.  Moreover,
analogously to the well-known situation in the PDE theory, {\it a
priori\/} bounds may play a crucial role in the study of the
existence problem.

\section{General facts and notations}

Let $\B({\R}^{d})$ be the family of all Borel sets in $\R^d$ and
$\B_{b}({\R}^{d})$ denotes the system of all bounded sets from
$\B({\R}^{d})$.

The space of $n$-point configuration is
\[
\Ga _{0}^{(n)}=\Ga _{0,{\R}^{d}}^{(n)}:=\left\{ \left. \eta \subset
{\R}^{d}\right| \,|\eta |=n\right\} ,\quad n\in \N_0:=\N\cup \{0\},
\]
where $|A|$ denotes the cardinality of the set $A$. The space
$\Ga_{\La}^{(n)}:=\Ga _{0,\La }^{(n)}$ for $\La \in \B_b({\R}^{d})$
is defined analogously to the space $\Ga_{0}^{(n)}$. As a set, $\Ga
_{0}^{(n)}$ is equivalent to the symmetrization of
\[
\widetilde{({\R}^{d})^n} = \left\{ \left. (x_1,\ldots ,x_n)\in
({\R}^{d})^n\right| \,x_k\neq x_l\,\,\mathrm{if} \,\,k\neq l\right\}
,
\]
i.e. $\widetilde{({\R}^{d})^n}/S_{n}$, where $S_{n}$ is the
permutation group over $\{1,\ldots,n\}$. Hence one can introduce the
corresponding topology and Borel $\sigma $-algebra, which we denote
by $O(\Ga_{0}^{(n)})$ and $\B(\Ga_{0}^{(n)})$, respectively. Also
one can define a measure $m^{(n)}$ as an image of the product of
Lebesgue measures $dm(x)=dx$ on $\bigl(\R^d, \B(\R^d)\bigr)$.

The space of finite configurations
\[
\Ga _{0}:=\bigsqcup_{n\in \N_0}\Ga _{0}^{(n)}
\]
is equipped with the topology which has structure of disjoint union.
Therefore, one can define the corresponding Borel $\sigma $-algebra
$\B (\Ga _0)$.

A set $B\in \B (\Ga _0)$ is called bounded if there exists $\La \in
\B_b({\R}^{d})$ and $N\in \N$ such that $B\subset
\bigsqcup_{n=0}^N\Ga _\La ^{(n)}$. The Lebesgue---Poisson measure
$\la_{z} $ on $\Ga_0$ is defined as
\[
\la _{z} :=\sum_{n=0}^\infty \frac {z^{n}}{n!}m ^{(n)}.
\]
Here $z>0$ is the so-called activity parameter. The restriction of
$\la _{z} $ to $\Ga _\La $ will be also denoted by $\la _{z} $. We
denote  $\lambda:=\lambda_1$.

The configuration space
\[
\Ga :=\left\{ \left. \ga \subset {\R}^{d}\ \right| \; |\ga \cap \La
|<\infty, \text{ for all } \La \in \B_b({\R}^{d})\right\}
\]
is equipped with the vague topology. It is a Polish space (see, e.g.,
\cite{KoKut}). The corresponding  Borel $\sigma $-algebra $ \B(\Ga
)$ is defined as the smallest $\sigma $-algebra for which all
mappings $N_\La :\Ga \rightarrow \N_0$, $N_\La (\ga ):=|\ga \cap \La
|$ are measurable, i.e.,
\[
\B(\Ga )=\sigma \left(N_\La \left| \La \in
\B_b({\R}^{d})\right.\right ).
\]
One can also show that $\Ga $ is the projective limit of the spaces
$\{\Ga _\La \}_{\La \in \B_b({\R}^{d})}$ w.r.t. the projections
$p_\La :\Ga \rightarrow \Ga _\La $, $p_\La (\ga ):=\ga _\La $, $\La
\in \B_b({\R}^{d})$.

The Poisson measure $\pi _{z} $ on $(\Ga ,\B(\Ga ))$ is given as the
projective limit of the family of measures $\{\pi _{z} ^\La \}_{\La
\in \B_b({\R}^{d})}$, where $\pi _{z} ^\La $ is the measure on $\Ga
_\La $ defined by $\pi _{z} ^\La :=e^{-z m (\La )}\la _{z}$.

We will use the following classes of functions:
$L_{\mathrm{ls}}^0(\Ga _0)$ is the set of all measurable functions
on $\Ga_0$ which have a local support, i.e. $G\in
L_{\mathrm{ls}}^0(\Ga _0)$ if there exists $\La \in \B_b({\R}^{d})$
such that $G\upharpoonright_{\Ga _0\setminus \Ga _\La }=0$;
$B_{\mathrm{bs}}(\Ga _0)$ is the set of bounded measurable functions
with bounded support, i.e. $G\upharpoonright_{\Ga _0\setminus B}=0$
for some bounded $B\in \B(\Ga_0)$.

On $\Ga $ we consider the set of cylinder functions
$\mathcal{F}L^0(\Ga )$, i.e. the set of all measurable functions $G$
on $\bigl(\Ga,\B(\Ga))\bigr)$ which are measurable w.r.t. $\B_\La
(\Ga )$ for some $\La \in \B_b({\R}^{d})$. These functions are
characterized by the following relation: $F(\ga )=F\upharpoonright
_{\Ga _\La }(\ga _\La )$.

The following mapping between functions on $\Ga _0$, e.g.
$L_{\mathrm{ls}}^0(\Ga _0)$, and functions on $\Ga $, e.g.
$\mathcal{F}L^{0}(\Ga )$, plays the key role in our further
considerations:
\begin{equation}
KG(\ga ):=\sum_{\eta \Subset \ga }G(\eta ), \quad \ga \in \Ga,
\label{KT3.15}
\end{equation}
where $G\in L_{\mathrm{ls}}^0(\Ga _0)$, see e.g.
\cite{KoKu99,Le75I,Le75II}. The summation in the latter expression
is taken over all finite subconfigurations of $\ga ,$ which is
denoted by the symbol $\eta \Subset \ga $. The mapping $K$ is
linear, positivity preserving, and invertible, with
\begin{equation}
K^{-1}F(\eta ):=\sum_{\xi \subset \eta }(-1)^{|\eta \setminus \xi
|}F(\xi ),\quad \eta \in \Ga _0.\label{k-1trans}
\end{equation}

Let $ \mathcal{M}_{\mathrm{fm}}^1(\Ga )$ be the set of all
probability measures $\mu $ on $\bigl( \Ga, \B(\Ga) \bigr)$ which
have finite local moments of all orders, i.e. $\int_\Ga |\ga _\La
|^n\mu (d\ga )<+\infty $ for all $\La \in \B_b(\R^{d})$ and $n\in
\N_0$. A measure $\rho $ on $\bigl( \Ga_0, \B(\Ga_0) \bigr)$ is
called locally finite iff $\rho (A)<\infty $ for all bounded sets
$A$ from $\B(\Ga _0)$. The set of such measures is denoted by
$\mathcal{M}_{\mathrm{lf}}(\Ga _0)$.

One can define a transform $K^{*}:\mathcal{M}_{\mathrm{fm}}^1(\Ga
)\rightarrow \mathcal{M}_{\mathrm{lf}}(\Ga _0),$ which is dual to
the $K$-transform, i.e., for every $\mu \in
\mathcal{M}_{\mathrm{fm}}^1(\Ga )$, $G\in \B_{\mathrm{bs}}(\Ga _0)$
we have
\[
\int_\Ga KG(\ga )\mu (d\ga )=\int_{\Ga _0}G(\eta )\,(K^{*}\mu
)(d\eta ).
\]
The measure $\rho _\mu :=K^{*}\mu $ is called the correlation
measure of $\mu $.

As shown in \cite{KoKu99} for $\mu \in
\mathcal{M}_{\mathrm{fm}}^1(\Ga )$ and any $G\in L^1(\Ga _0,\rho
_\mu )$ the series \eqref{KT3.15} is $\mu $-a.s. absolutely
convergent. Furthermore, $KG\in L^1(\Ga ,\mu )$ and
\begin{equation}
\int_{\Ga _0}G(\eta )\,\rho _\mu (d\eta )=\int_\Ga (KG)(\ga )\,\mu
(d\ga ). \label{Ktransform}
\end{equation}

A measure $\mu \in \mathcal{M}_{\mathrm{fm} }^1(\Ga )$ is called
locally absolutely continuous w.r.t. $\pi _{z} $ iff $\mu_\La :=\mu
\circ p_\La ^{-1}$ is absolutely continuous with respect to $\pi
_{z} ^\La $ for all $\La \in \B_\La ({\R}^{d})$. In this case $\rho
_\mu :=K^{*}\mu $ is absolutely continuous w.r.t $\la$. We define
correlation functional $k_\mu:\Ga_0\rightarrow \R_+$ corresponding
to the~measure $\mu$ as the~Radon---Nikodym derivative:
\[
k_{\mu}(\eta):=\frac{d\rho_{\mu}}{d\la}(\eta),\quad \eta\in\Ga_{0}.
\]
The correlation functional $k_\mu$ may be considered as the system
of correlation functions corresponding to the restrictions
$k_\mu\upharpoonright_{\Ga^{(n)}}$. These functions are defined as
following
\begin{equation}
k_{\mu}^{(n)}:(\R^{d})^{n}\longrightarrow\R_{+}
\end{equation}
\[ k_{\mu}^{(n)}(x_{1},\ldots,x_{n}):=\left\{\begin{array}{ll}
k_{\mu}(\{x_{1},\ldots,x_{n}\}), & \mbox{if $(x_{1},\ldots,x_{n})\in
\widetilde{(\R^{d})^{n}}$}\\ 0, & \mbox{otherwise}
\end{array}
\right.
\]
and they are well known in statistical physics, see e.g \cite{R69},
\cite{R70}. In applications a specially important role play
correlation functions of the first and second orders: $k^{(1)}(x)$
and $k^{(2)}(x, y)$. These functions describe, respectively, the
density of particles and pair correlations.

A measure $\mu \in \mathcal{M}_{\mathrm{fm} }^1(\Ga )$ is called
translation invariant if it is invariant with respect to shifts of
configurations $\Ga\ni\ga\mapsto\{x+a \,|\, x\in\ga\}\in\Ga$ for any
$a\in\R^d$. The first-order correlation function of such measure
doesn't depend on the space coordinate: $k^{(1)}(x)\equiv
k_\mu^{(1)}$; and the second-order correlation function depends on
difference of coordinates: $k_\mu^{(2)}(x,y)=k_\mu^{(2)}(x-y)$.

\section{Stochastic development models}
Spatial birth-and-death processes describe dynamics of
configurations in $\R^d$ when particles (agents, companies, economic units)
disappear (die) from configurations and, on the other hand, some new
particles appear (born) somewhere in the space. The generator of
spatial birth-and-death dynamics is heuristically given on
measurable functions $F:\Ga\rightarrow\R$ by
\begin{multline}\label{genBaD}
( LF) ( \ga )  =\sum_{x\in \ga } d(x,\ga\setminus x)\bigl[
F( \ga \setminus x) -F( \ga ) \bigr]  \\
+\int_{\R^{d}} b(x,\ga) \bigl[ F( \ga \cup x) -F( \ga ) \bigr] dx,
\end{multline}
where $d,b:\X\times\Ga\to[0,\,\infty]$ are measurable rates of  death
and birth respectively. Of course, these rates should be finite
for a.a. $\ga\in\Ga$ with respect to a proper measure. Suppose that,
additionally,  $b(\cdot,\gamma) \in L^1_{loc}(\X)$  for a.a.
$\ga\in\Ga$. Then this  operator is well-defined at least on $\F
L^0(\Ga)$. Indeed, for $F\in\F L^0(\Ga)$
\[
F( \ga \setminus x ) - F ( \ga )= F( \ga \cup x ) - F ( \ga )=0,
\quad x\in\La^c:=\R^d\setminus\La,
\]
and the both terms of~\eqref{genBaD} are finite.


Note that if $L$ is a generator of such process (even if we know
that this process exists) then for the study of properties of the
corresponding stochastic dynamics we need some information about the semigroup
associated with $L$. This semigroup determines a solution to the
Kolmogorov equation which  has the following form:
\begin{equation}
\frac{dF_t}{dt}=L F_t,\qquad F_t\bigm|_{t=0}=F_0.\label{Kolmogor}
\end{equation}
In various applications the evolution of the corresponding
correlation functions (or measures) helps already to understand the
behavior of the process. The evolution of the correlation functions
of the process is related  to the evolution of states
of the system. The latter evolution is  given as a solution
to the dual Kolmogorov equation:
\begin{equation}
\frac{d\mu_t}{dt}=L^* \mu_t, \qquad
\mu_t\bigm|_{t=0}=\mu_0,\label{FokkerPlanc}
\end{equation}
where $L^*$ is the adjoint operator to $L$ on
$\M^1_\mathrm{fm}(\Ga)$, provided, of course, that it exists.

Using explicit form of $\hat{L}$ we derive the evolution equation
for \textit{quasi-observables} (functions on $\Ga_{0}$)
corresponding to the Kolmogorov equation \eqref{Kolmogor}. It has
the following form
\begin{equation}
\frac{dG_t}{dt}=\widehat{L}G_t,\qquad
G_t\bigm|_{t=0}=G_0.\label{quasiKolmogor}
\end{equation}
Then in the way analogous to those in which  the equation \eqref{FokkerPlanc} was determined for
\eqref{Kolmogor}, we get an evolution equation for the correlation
functions corresponding to the equation \eqref{quasiKolmogor}:
\begin{equation}
\frac{dk_t}{dt}=\widehat{L}^{*}k_t,\qquad
k_t\bigm|_{t=0}=k_0.\label{corrfunctiona}
\end{equation}
The generator $\hat{L}^*$ here is the dual to $\hat{L}$ w.r.t. the duality
 given by the following expression:
\begin{equation}
\left\langle\!\left\langle G,\,k\right\rangle\!\right\rangle =
\int_{\Ga_{0}}G\cdot k\, d\la \label{duality}.
\end{equation}

\paragraph{Free development model}

A simplest model in considered framework  is a model of free
development when particles (identical economic units) are born
independently without any influence of existing ones.
An interpretation is that a "decision" about appearing of a new
company is produced outside of the system and it is not motivated by
the situation inside of the system. Moreover, in this simplest model
particles (units) will not die (will not become bankrupts).

The formal pre-generator of the Markov  dynamics that describes such model is
the following:
\begin{equation*}
\left( \Ls F\right) \left( \gamma \right) =\sigma \int_{\mathbb{R}%
^{d}}\left[ F\left( \gamma \cup x\right) -F\left( \gamma \right)
\right] dx,
\end{equation*}%
where $\sigma >0$ is the intensity rate of new units creation.

It is easy to see that the operator $\Ls $ is well defined, for
example, on the set $\F L^0(\Ga)$. The corresponding Markov  process
exists due to, e.g., \cite{GarKur06}. Using results from
\cite{FiKoOl07}, we obtain
\begin{equation}
\bigl( \hLs G\bigr) \left( \eta \right) =\sigma \int_{%
\mathbb{R}^{d}}G\left( \eta \cup x\right) dx.\label{exprL+}
\end{equation}%
and
\begin{equation} \bigl( \hLsd k\bigr) \left( \eta \right) =\sigma
\sum_{x\in \eta }k\left( \eta \setminus x\right) .\label{exprL+dual}
\end{equation}

Immediately from \eqref{exprL+dual} and \eqref{corrfunctiona} (see
also \cite{FK08c}) we conclude that the density of the free
development model has the form
\[
k_t^{(1)}(x) = k_0^{(1)}(x) + \sigma t.
\]
Therefore, the density has linear growth in time. To prevent this
growth we need to modify the generator introducing some regulation
mechanisms in the model.

\paragraph{Development model with global regulation}  Below we
consider a model with a global regulation reflected in the death rate by an assumption
about a finite life time for economic units. More precisely, we assume that each point of the
configuration has exponentially distributed (with some positive
parameter $m$) random life time and these random times are independent. Hence,  a death (bankruptcy)
appears  as a random event equally distributed for all economic units independently of their
space locations.

A pre-generator describing such process has the following form:
\begin{equation*}
\left( \Lsm F\right) \left( \gamma \right) =m\sum_{x\in\ga}\bigl[F(\ga\setminus x
) - F(\ga) \bigr] + \sigma \int_{\mathbb{R}%
^{d}}\left[ F\left( \gamma \cup x\right) -F\left( \gamma \right)
\right] dx.
\end{equation*}
Note that the expression for $\Lsm$ coincides with the one for the generator of so-called
Surgailis process (see \cite{Sur83}, \cite{Sur84}, \cite{KLR08}).
Again, using results from \cite{FiKoOl07}, we obtain
\begin{align}
\bigl( \hLsm G\bigr) \left( \eta \right) &=-m|\eta| G(\eta)+\sigma \int_{%
\mathbb{R}^{d}}G\left( \eta \cup x\right) dx,\label{exprLP} \\
\bigl( \hLsmd k\bigr) \left( \eta \right) &=-m|\eta| k(\eta)+\sigma
\sum_{x\in \eta }k\left( \eta \setminus x\right) .\label{exprLPdual}
\end{align}

The considered stochastic dynamics has a unique invariant measure which is
the Poisson measure on $\Gamma$ with constant intensity $\frac{\sigma}{m}$.

Using \eqref{exprLPdual} (see also \cite{FK08c}) one can obtain a precise expression for
the density of the process:
\[
k^{(1)}_t(y)=e^{-tm}k^{(1)}_{0}\left(  y \right) +\frac{\sigma }{m} \left(
1-e^{-tm}\right).
\]
Therefore, for an initial state with bounded density  any positive global regulation rate $m$ gives time-space
bounded density  which converges uniformly in space to the limiting Poisson density. (For properties
of   higher order correlation functions see \cite{FK08c}.)

\paragraph{Establishment effects in the development model}
As we pointed out before, in the free development model an appearing of a new unit on the market and its location are
independent of the presented configuration of the system. A reasonable generalization of this model is such that the newborn unit prefers to choose a location with smaller density of already existing units. The latter may be considered as higher probability to survive in less occupied regions. The corresponding term which decreases the birth rate of the
 generator in densely populated areas is called the establishment term. We
consider the special case when this rate has  exponential form, but our
considerations may be extended to more general establishment rates.

Let $0\leq \phi \in L^{1}\left( \mathbb{R}^{d}\right) $,
$\phi(-x)=\phi(x)$, $x\in\X$,  and
\[
\left( L_\phi F\right) \left( \gamma \right) =\int_{\mathbb{R}^{d}}
\exp\Bigl\{-\sum_{y\in\ga}\phi (x-y)\Bigr\} \left[ F\left( \gamma
\cup x\right) -F\left( \gamma \right) \right] dx.
\]%
We suppose that there exist the dynamics of measures $\mu_t$ and let
$k_t$ will be corresponding correlation functions. Actually, the
existence of a Markov process for considered case may be obtained
from \cite{GarKur06}. Using for any $\varphi\in C_0(\X)$ the
equality
\[
\frac{\partial }{\partial t}\int_\X k_{t}^{\left( 1\right) }\left(
x\right)\varphi(x)\,dx =\frac{\partial }{\partial t}\int_\Ga
\left\langle \varphi, \gamma \right\rangle d\mu_t(\ga)= \int_\Ga
L_\phi \left\langle \varphi, \gamma \right\rangle d\mu_t(\ga)
\]
we obtain, by Jensen's inequality,
\begin{multline*}
\frac{\partial }{\partial t}k_{t}^{\left( 1\right) }\left( x\right)
=\int_{\Gamma } \exp\Bigl\{-\sum_{y\in\ga}\phi (x-y)\Bigr\} d\mu
_{t}\left( \gamma \right) \\ \geq \exp\left( - \int_{\Gamma
}\sum_{y\in\ga}\phi (x-y) d\mu _{t}\left( \gamma \right) \right)
=\exp\left( -\int_{ \mathbb{R}^{d}}\phi \left( x-y\right)
k_{t}^{\left( 1\right) }\left( y\right) dy\right) .
\end{multline*}
In the translation invariant case we obtain
\[
\frac{d}{dt}k_{t}^{\left( 1\right) }\geq \exp\left( -\left\langle
\phi \right\rangle k_{t}^{\left( 1\right) }\right) ,
\]
where $\left\langle \phi \right\rangle = \int_\X \phi(x) \,dx$.
Hence, if $g_{t}$ is a positive solution of the equation
\[
\frac{d}{dt}g_{t}=e^{-\left\langle \phi \right\rangle g_{t}},
\]%
then $k_{0}^{\left( 1\right) }\geq g_{0}$ implies $k_{t}^{\left(
1\right) }\geq g_{t}$.

One has
\begin{align*}
e^{\left\langle \phi \right\rangle g_{t}}dg_{t} &=dt \\
\frac{1}{\left\langle \phi \right\rangle }e^{\left\langle \phi
\right\rangle
g_{t}} &=t+\frac{C}{\left\langle \phi \right\rangle } \\
g_{t} &=\frac{1}{\left\langle \phi \right\rangle }\ln \left(
\left\langle
\phi \right\rangle t+C\right) ,~~C>1 \\
g_{0} &=\frac{1}{\left\langle \phi \right\rangle }\ln C.
\end{align*}
Putting for any $k_{0}^{\left( 1\right) }$ the initial
value $g_{0} =k_{0}^{\left( 1\right) }$  we obtain that
\[
k_{t}^{\left( 1\right) }\geq \frac{1}{\left\langle \phi
\right\rangle }\ln \left( \left\langle \phi \right\rangle t+\exp
\left\{ k_{0}^{\left( 1\right) }\left\langle \phi \right\rangle
\right\} \right).
\]

Therefore, the establishment term cannot prevent unboundedness of
density. We may expect only essentially slower growth due to the
establishment effect.

\begin{remark}
Of course, if we consider two regulation mechanisms, namely, the global
regulation and the establishment together, then the first-order
correlation function will be also bounded (more precisely,
all correlation functions will have sub-Poissonian bounds, cf.
\cite{FK08c}). Moreover, in this case the operator
\begin{multline*}
\left( L_G F\right) \left( \gamma \right)
=m\sum_{x\in\ga}\bigl[F(\ga\setminus x ) - F(\ga) \bigr] \\ + \sigma
\int_{\mathbb{R} ^{d}}\left[ F\left( \gamma \cup x\right) -F\left(
\gamma \right) \right] \exp\Bigl\{-\sum_{y\in\ga}\phi (x-y)\Bigr\}
dx
\end{multline*}
is the generator of so-called Glauber dynamics for a classical gas model (see, e.g.,
\cite{KL05}, \cite{KoKutZh06}). If $\phi$ has some additional
properties such that there exists Gibbs measure with this potential,
then such measure will be invariant (and even symmetrizing one) for the generator $L_G$. On the
other hand, known properties of the corresponding Markov dynamics imply that
 corresponding correlation functions satisfied so-called
generalized Ruelle bounds (see \cite{KoKutZh06} for details).
\end{remark}

\section{Stochastic  development models with competitions }

In the previous section we considered, in particular, global
(outward) regulation in the model. As wee see, such regulation may
prevent unbounded (linear) growth (in time) of the density of our
system. In this section we consider the case of a local regulation
which appear due to the competition between elements (units) of
the system.  A
pre-generator which describes such model has the following form:
\begin{align*}
\left( L_{a,\sigma}F\right) \left( \gamma \right) &=\sum_{x\in
\gamma }\biggl( \sum_{y\in \gamma \setminus x}a\left( x-y\right)
\biggr) \left[ F\left(
\gamma \setminus x\right) -F\left( \gamma \right) \right] \\
&\quad+\sigma \int_{\R^d}\left[ F\left( \gamma \cup x\right) -F\left( \gamma
\right) \right] dx.
\end{align*}
Here $0\leq a\in L^1(\R^d)$ is an even function s.t.
\[
\left\langle a\right\rangle :=\int_{\mathbb{R}^{d}}a\left( x\right)
dx>0.
\]

The question about existence of a process with the generator
$L_{a,\sigma}$ we will not discuss in this paper. We just assume that there exist the
dynamics of measures $\mu_t$ and let $k_t$ will be the corresponding
correlation functional.

Using results from \cite{FiKoOl07} we obtain that
\begin{align*}
\bigl( \hat{L}_{a,\sigma}G\bigr) \left( \eta \right) &=-2E_{a}\left( \eta \right)
G\left( \eta \right) -\sum_{x\in \eta }\biggl( \sum_{y\in \eta \setminus
x}a\left( x-y\right) \biggr) G\left( \eta \setminus x\right) \\
&\quad +\sigma  \int_{\R^d}G\left( \eta \cup x\right) dx
\end{align*}%
and
\begin{align}
\bigl( \hat{L}_{a,\sigma}^*k\bigr) \left( \eta \right) &=-2E_{a}\left( \eta
\right) k\left( \eta \right) -\int_{\R^d}\sum_{y\in \eta }a\left( x-y\right)
k\left( \eta \cup x\right) dx \label{LdualSoc}\\
&\quad +\sigma  \sum_{x\in \gamma }k\left( \eta \setminus x\right)
,\nonumber
\end{align}%
where we used the following notations for the energy functional
corresponding to the pair potential $a(\cdot)$:
\[
E_a(\eta)=\sum_{\{x,y\}\subset\eta}a(x-y),
\quad \eta\in\Ga_0.
\]

It is easy to see that the Cauchy problems \eqref{quasiKolmogor} and
\eqref{corrfunctiona} for quasi-observables and correlation
functions respectively have a form of hierarchical chains and,
therefore, can not be solved explicitly. The latter is a common
problem in the study of stochastic dynamics of IPS. In several
particular models such as Glauber type dynamics in continuum
\cite{KoKuMi}, \cite{KoKutZh06} or some spatial ecological models
\cite{FKoKu} this difficulty may be overcame via a proper
perturbation theory techniques. As a result, in the mentioned works
we have existence results for corresponding evolutional equations
together with certain a-priori bound for the solutions. Note that
the perturbation techniques needs, in any case, a presence in the
system a small parameter. In the considered model such parameter is
clearly absent. Nevertheless, one can try to find estimate for the
correlation functions. Actually, in the presented below approach we
will use the explicit form of the Markov generator to obtain an
a-priori bound on the density of the system.


We will say that a sequence $\{\La_k,\, k\in\N\}$
of open bounded subsets of $\R^d$ is of~{\em F-type} if~$\bigcup_{k\in\N}\La_k=\R^d$,
$\La_k\subset\La_{k+1}$, $k\in\N$ and there
exists $F>0$ such that for any $h\in(0;1)$
and for any $k\in\N$
\[
s(\La_k,h):=\frac{|\La_k(h)\setminus\La_k|}{|\La_k|}
\leq F,
\]
where
\[
\Lambda_k (h):=\left\{ x:\inf_{y\in \Lambda_k }\left\vert x-y\right\vert <h\right\}.
\]

A simple example of {\em F-type}\/ sequence is
the sequence of balls $\La_k=B(0,k)$ with center at origin and radius $k\in\N$. Indeed,
for any $h<1$
\[
s \left( \Lambda_k ,h\right) =\frac{\left\vert
\Lambda_k(h)\setminus\Lambda_k\right\vert }{\left\vert \Lambda_{k}
\right\vert }-1=\frac{\left( R+h\right) ^{d}}{R^{d}}-1 =\left(
1+\frac{h}{R}\right) ^{d}-1<2^{d}-1.
\]

For any $\La\in\B_c(\R^d)$ we will call {\em
the average density} of the our system the
following object
\[
\rho_t^\La:=\frac{1}{|\La|}\int_\La k_t^{(1)}(x)dx,
\]
where $k_t^{(1)}$ is the first-order correlation
functions  (density) at moment $t\geq0$.

\begin{lemma}\label{lemSS}
Suppose that the function $a$ is continuous and positive definite
and the sequence $\{\La_k,k\in\N\}$ is F-type. Then there exists
$c>0$ such that for any open $\La\in\{\La_k,k\in\N\}$
\[
2E_a(\eta)\geq c\,\frac{|\eta|^2}{|\La|},
\quad \eta\in\Ga_\La.
\]
\end{lemma}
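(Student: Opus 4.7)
The plan is to exploit positive definiteness of $a$ through the symmetric bilinear form
\[
B(\mu,\nu):=\int_{\R^d}\!\!\int_{\R^d} a(x-y)\,d\mu(x)\,d\nu(y),
\]
which is positive semi-definite on signed Borel measures. Indeed, since $a\in L^1(\R^d)$ is continuous and positive definite, Bochner together with Riemann--Lebesgue gives that $\hat a$ is a continuous nonnegative function (with $\hat a(0)=\langle a\rangle>0$), so
\[
B(\mu,\mu)\ =\ (2\pi)^{-d}\int_{\R^d}|\hat\mu(k)|^2\,\hat a(k)\,dk\ \ge\ 0
\]
for any signed measure $\mu$ with well-defined Fourier transform. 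The Cauchy--Schwarz inequality in this form yields $B(\mu_\eta,\mu_\eta)\ge B(\mu_\eta,\nu)^2/B(\nu,\nu)$ for any test measure $\nu$ with $B(\nu,\nu)>0$, and I will test $\mu_\eta:=\sum_{x\in\eta}\delta_x$ against $\nu:=\mathbf{1}_{\La(R_0)}$ for a small enlargement $\La(R_0)$, whose volume is controlled by the \emph{F-type} property.

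First I would extract a positive pointwise bound for $a$ near the origin. Continuity and positive definiteness imply $|a(x)|\le a(0)$, so $a\not\equiv 0$ forces $a(0)>0$; by continuity there exists $R_0\in(0,1)$ with $a\ge a(0)/2$ on $B_{R_0}$, and hence
\[
\alpha_0\ :=\ \int_{B_{R_0}}a(z)\,dz\ \ge\ \frac{a(0)}{2}\,|B_{R_0}|\ >\ 0.
\]
For any $y\in\La$, $B_{R_0}(y)\subset\La(R_0)$, whence $\int_{\La(R_0)}a(x-y)\,dx\ge \alpha_0$, while trivially $\int_{\La(R_0)}a(x-y)\,dx\le\langle a\rangle$. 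These yield
\[
B(\mu_\eta,\mathbf{1}_{\La(R_0)})\ \ge\ \alpha_0\,|\eta|,\qquad B(\mathbf{1}_{\La(R_0)},\mathbf{1}_{\La(R_0)})\ \le\ \langle a\rangle\,|\La(R_0)|\ \le\ (1+F)\,\langle a\rangle\,|\La|,
\]
the last bound being the \emph{F-type} property applied with $h=R_0<1$. Substituting into Cauchy--Schwarz,
\[
\sum_{x,y\in\eta}a(x-y)\ =\ B(\mu_\eta,\mu_\eta)\ \ge\ \frac{\alpha_0^{\,2}}{(1+F)\,\langle a\rangle}\cdot\frac{|\eta|^2}{|\La|}.
\]
Since $\sum_{x,y\in\eta}a(x-y)=2E_a(\eta)+a(0)|\eta|$, this gives the asserted bound with $c=\alpha_0^{\,2}/((1+F)\,\langle a\rangle)$, up to the lower-order self-interaction correction $a(0)|\eta|$.

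The main delicate point is the boundary: one must enlarge $\La$ so that $B_{R_0}(y)\subset\La(R_0)$ for every $y\in\La$, yet still keep $|\La(R_0)|$ comparable to $|\La|$---precisely what the \emph{F-type} condition provides, and the reason for insisting $R_0<1$. A secondary point is the discrepancy between the unordered-pair sum $2E_a(\eta)$ and the full double sum $\sum_{x,y\in\eta}a(x-y)$ directly bounded by the positive-definite form; they differ by the linear term $a(0)|\eta|$, which is of lower order in $|\eta|$ and is absorbed in the intended application of the lemma to the density estimate of Theorem~4.2.
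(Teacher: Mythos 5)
Your mechanism is, in substance, exactly the one behind the external result on which the paper's proof rests. The paper proves this lemma by a one-line reduction to the superstability inequality of \cite{LPdS84}, plus the observations that the F-type property bounds the geometric factor $s(\La,h)$ uniformly by $F$ and that $h\in(0,1)$ can be chosen with $\langle a\rangle\neq\delta(h)$; the inequality of \cite{LPdS84} is itself obtained by testing the positive semi-definite form against the indicator of an $h$-enlargement of $\La$ and optimizing, which is precisely your Cauchy--Schwarz step with $\nu=\1_{\La(R_0)}$. Your individual steps are sound: $\hat a\ge 0$ (and $\hat a\in L^1$, since $\hat a\ge0$ and $a$ is continuous at $0$, so the Plancherel identity you use is legitimate), $a(0)>0$, the choice $R_0<1$ so that the F-type bound applies to $|\La(R_0)|$, $B(\mu_\eta,\1_{\La(R_0)})\ge\alpha_0|\eta|$, and $B(\1_{\La(R_0)},\1_{\La(R_0)})\le(1+F)\langle a\rangle|\La|$. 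So you have given a correct, self-contained derivation of the paper's imported ingredient with an explicit constant.

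However, what you actually establish is $2E_a(\eta)\ge c\,|\eta|^2/|\La|-a(0)|\eta|$, and your closing sentence, which declares the linear term ``absorbed,'' is not a proof of the lemma as stated. No argument can remove that term: under the stated hypotheses the diagonal-free bound is false. Take $a(x)=\prod_{i=1}^d(1-|x_i|)_+$, which is continuous, nonnegative, even, integrable and positive definite; let $\La=\La_k=B(0,k)$ with $k$ large and let $\eta\subset\La$ consist of two points farther apart than the diameter of $\operatorname{supp}a$; then $2E_a(\eta)=0$ while $c\,|\eta|^2/|\La|=4c/|B(0,k)|>0$. The defect is thus in the lemma's statement and in the paper's quotation of \cite{LPdS84}: the inequality there concerns the full double sum $\sum_{x,y\in\eta}a(x-y)=2E_a(\eta)+a(0)|\eta|$, and the diagonal term is silently dropped. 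Your corrected form is the right one, but you should then say explicitly how it propagates into Theorem~\ref{thm-sm}: the differential inequality becomes $\frac{d}{dt}\rho_t^\La\le\sigma+a(0)\rho_t^\La-c\bigl(\rho_t^\La\bigr)^2$, so the uniform bound on the averaged density survives, with the admissible threshold $D$ being any number exceeding the positive root of $cD^2-a(0)D-\sigma=0$ rather than $\sqrt{\sigma/c}$. Stating this is needed; ``absorbed'' is not.
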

\begin{proof}
In \cite{LPdS84}, it was shown that for any continuous positive
definite function $a$ the energy $E_a$ is superstable, namely, for
any open $\La\subset\R^d$ and for any $\eta :=\left\{ x_{i}\right\}
_{i=1}^{n}\subset \Lambda $ the following inequality holds
\begin{equation*}
2E_{a}\left( \eta \right) \geq \frac{n^{2}}{\left\vert \Lambda \right\vert }%
\frac{\left[ \left\langle a\right\rangle -\delta \left( h\right) \right] ^{2}%
}{\left[ \left\langle a\right\rangle +\delta \left( h\right) +\sigma \left(
\Lambda ,h\right) \left\langle a\right\rangle \right] },
\end{equation*}%
where%
\begin{equation*}
\delta \left( h\right) =2\int_{\left\vert x\right\vert >h}a\left( x\right) dx\geq 0.
\end{equation*}

Therefore, for any $\La\in\{\La_k,k\in\N\}$
\[
2E_{a}\left( \eta \right) \geq \frac{n^{2}}{\left\vert \Lambda \right\vert
}\frac{\left[ \left\langle a\right\rangle -\delta \left( h\right) \right]
^{2}}{\bigl[ \delta \left( h\right) +(F +1)\left\langle a\right\rangle \bigr] } =:
\frac{n^{2}}{\left\vert \Lambda \right\vert
}c.
\]
Let $h\in(0;\,1)$ be such that
\[
 \left\langle a\right\rangle -\delta \left( h\right)= \int_{\left\vert x\right\vert \leq
h}a\left( x\right) dx -\int_{\left\vert x\right\vert >h}a\left( x\right) dx\neq0
\]
(we may always choose such $h$ since the first integral is an
increasing function of $h$ and the second one is a decreasing
function). Stress that $c>0$ and doesn't depend on $\La$.
\end{proof}
\begin{theorem}\label{thm-sm}
Suppose that the function $a$ is continuous and positive definite
and the sequence $\{\La_k,k\in\N\}$ is F-type; let $c$ be as in
Lemma~\ref{lemSS}. Suppose also that there exists
$D>\sqrt{\dfrac{\sigma}{c}}$ such that $\rho_0^{\La_k}\leq D$,
$k\in\N$. Then for any $t>0$, $k\in\N$
\[
\rho_t^{\La_k}\leq D.
\]
\end{theorem}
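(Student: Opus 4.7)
The plan is to derive a Riccati-type differential inequality for the average density $\rho_t^{\La_k}$ and then close the argument by a one-dimensional ODE comparison. First I would specialize the correlation-function equation \eqref{LdualSoc} to $\eta=\{x\}$: since $E_a(\{x\})=0$ and $k_t(\emptyset)=1$, it collapses to
\[
\frac{d}{dt}k_t^{(1)}(x)=\sigma-\int_{\R^d}a(x-y)\,k_t^{(2)}(x,y)\,dy.
\]
Integrating this identity over $\La_k$ and dividing by $|\La_k|$ yields an ODE for $\rho_t^{\La_k}$ whose only nontrivial term is a double integral of $a$ against $k_t^{(2)}$.

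The key step is to bound that double integral from below by $c\,|\La_k|\,(\rho_t^{\La_k})^2$. Using $a\geq0$ and $k_t^{(2)}\geq0$, I would first shrink the inner integration from $\R^d$ to $\La_k$; by the very definition of the pair correlation function together with the evenness of $a$, the resulting integral over $\La_k\times\La_k$ equals $\int_\Ga 2E_a(\ga\cap\La_k)\,\mu_t(d\ga)$. Lemma~\ref{lemSS} now applies pointwise in $\ga$ with a constant $c$ independent of $k$, giving the lower bound $(c/|\La_k|)\,\E_{\mu_t}\bigl[|\ga\cap\La_k|^2\bigr]$. A final application of Jensen's inequality to $n\mapsto n^2$, combined with the identity $\E_{\mu_t}[|\ga\cap\La_k|]=|\La_k|\rho_t^{\La_k}$, yields the desired quadratic bound. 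Altogether,
\[
\frac{d}{dt}\rho_t^{\La_k}\leq\sigma-c\,(\rho_t^{\La_k})^2.
\]

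The theorem then follows by comparing $\rho_t^{\La_k}$ with the solution $\bar\rho_t$ of the autonomous ODE $\dot{\bar\rho}_t=\sigma-c\bar\rho_t^2$ with $\bar\rho_0=D$. Since $D>\sqrt{\sigma/c}$ lies above the stable equilibrium $\sqrt{\sigma/c}$ of this Riccati equation, $\bar\rho_t$ is monotonically non-increasing, so $\bar\rho_t\leq D$ for all $t\geq0$; the standard comparison principle for scalar differential inequalities, applied to the locally Lipschitz right-hand side $\sigma-c\rho^2$, then gives $\rho_t^{\La_k}\leq\bar\rho_t\leq D$.

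The main obstacle is the middle step, where the superstability estimate, originally a \emph{pointwise} bound on finite configurations, must be converted into an estimate purely in terms of the first-order correlation function. This works precisely because Lemma~\ref{lemSS} supplies a constant $c$ uniform in $k$ (guaranteed by the F-type geometry of $\{\La_k\}$ together with positive-definiteness of $a$), and because Jensen's inequality passes from $\E[N^2]$ to $(\E[N])^2$ without losing the essential quadratic dependence on $\rho_t^{\La_k}$. A minor technical point is the exchange of time derivative and spatial integration when writing the ODE for $\rho_t^{\La_k}$, which is granted by the working assumption that the measure-valued dynamics $\mu_t$ and its correlation functions exist with sufficient regularity.
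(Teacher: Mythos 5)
Your proposal is correct and takes essentially the same route as the paper: both derive the Riccati inequality $\frac{d}{dt}\rho_t^{\La_k}\le\sigma-c\,(\rho_t^{\La_k})^2$ from the superstability bound of Lemma~\ref{lemSS} plus Jensen/H\"older, and then conclude by comparison with the scalar ODE; the paper obtains the inequality by applying $L_{a,\sigma}$ to $F(\gamma)=|\gamma_\La|$ and integrating against $\mu_t$, which is exactly the dual formulation of your computation with $k_t^{(1)}$, $k_t^{(2)}$ and the identity $\int_{\La\times\La}a(x-y)k_t^{(2)}(x,y)\,dx\,dy=\int_\Ga 2E_a(\ga_\La)\,d\mu_t(\ga)$. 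The only cosmetic difference is the final step, where the paper solves the Riccati equation explicitly while you invoke monotonicity of the solution started at $D$ above the equilibrium $\sqrt{\sigma/c}$ together with the comparison principle; both are valid.
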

\begin{proof}
Note that for $F\left( \gamma \right) =\left\langle \varphi ,\gamma \right\rangle $, $\ga\in\Ga$,
$\varphi\in C_0(\R^d)$ we have
\begin{equation*}
\left( L_{a,\sigma}F\right) \left( \gamma \right) =-\sum_{x\in \gamma }\left(
\sum_{y\in \gamma \setminus x}a\left( x-y\right) \right) \varphi \left(
x\right) +\sigma  \int_{\R^d}\varphi \left( x\right) dx.
\end{equation*}%
Let $\varphi \left( x\right) =\1_{\Lambda }\left( x\right) $, $\Lambda \in
\{\La_k,k\in\N\} $. Then
$F(\ga)=|\ga_\La|$ and
\begin{align*}
\left( L_{a,\sigma}F\right) \left( \gamma \right) &=-\sum_{x\in \gamma }\left( \sum_{y\in \gamma \setminus x}a\left(
x-y\right) \right) \1_{\Lambda }\left( x\right) +\sigma  \left\vert \Lambda
\right\vert \\
&=-\sum_{x\in \gamma _{\Lambda }}\left( \sum_{y\in \gamma \setminus
x}a\left( x-y\right) \right) +\sigma  \left\vert \Lambda \right\vert \\
&\leq -\sum_{x\in \gamma _{\Lambda }}\left( \sum_{y\in \gamma _{\Lambda
}\setminus x}a\left( x-y\right) \right) +\sigma  \left\vert \Lambda
\right\vert \\
&=-2E_{a}\left( \gamma _{\Lambda }\right) +\sigma  \left\vert \Lambda
\right\vert \leq -\frac{c}{\left\vert \Lambda \right\vert }\left\vert \gamma _{\Lambda
}\right\vert ^{2}+\sigma  \left\vert \Lambda \right\vert.
\end{align*}

Let us set
\begin{align*}
n_{t}^{\Lambda }&:=\int_\Ga  \left\vert \ga_ \Lambda \right\vert d\mu_t (\ga)= \int_\Ga  \left\langle \1_\Lambda, \ga \right\rangle d\mu_t (\ga)\\& = \int_\X \1_\La(x) k_t^{(1)}(x)dx=\int_\La
k_t^{(1)}(x)dx = |\La|\rho_t^\La.
\end{align*}%
Then using Holder inequality
\begin{align*}
\frac{d}{dt}n_{t}^{\Lambda }
&=\int_{\Gamma} L_{a,\sigma}\left\vert \gamma _\Lambda
\right\vert d\mu _{t}\left( \gamma \right) \leq \int_{\Gamma}\left( \sigma  \left\vert \Lambda
\right\vert -\frac{c}{\left\vert \Lambda \right\vert }\left\vert \gamma
_{\Lambda }\right\vert ^{2}\right) d\mu _{t}\left( \gamma \right) \\
&=\sigma  \left\vert \Lambda \right\vert -\frac{c}{\left\vert \Lambda
\right\vert }\int_{\Gamma }\left\vert \gamma_ \Lambda
\right\vert ^{2}d\mu _{t}\left( \gamma \right) \leq \sigma  \left\vert \Lambda \right\vert -\frac{c}{\left\vert \Lambda
\right\vert }\left( \int_{\Gamma }\left\vert \gamma _\Lambda
\right\vert d\mu _{t}\left( \gamma \right) \right) ^{2} \\
&=\sigma  \left\vert \Lambda \right\vert -\frac{c}{\left\vert \Lambda
\right\vert }\left( n_{t}^{\Lambda }\right) ^{2}.
\end{align*}%
As a result,
\begin{equation*}
\frac{d}{dt}\rho _{t}^\La\leq \sigma  -c\bigl(\rho _{t}^\La\bigr)^{2}.
\end{equation*}%
Therefore, if we consider the positive solutions
of the Cauchy problem%
\begin{equation}\label{addCauchy}
\left\{
\begin{array}{l}
\dfrac{d}{dt}g\left( t\right) =\sigma  -cg^{2}\left( t\right) \\[2mm]
g\left( 0\right) =g_{0}
\end{array}%
\right.
\end{equation}%
with proper $g_0>0$ and if $\rho _{0}^\La\leq g_{0}$ then $\rho _{t}^\La\leq g\left( t\right) $, $t>0$. Solving
\eqref{addCauchy} we obtain
\begin{gather*}
\ln \frac{\left\vert \sqrt{c}g\left( t\right) +\sqrt{\sigma  }\right\vert }{%
\left\vert \sqrt{c}g\left( t\right) -\sqrt{\sigma  }\right\vert }-\ln \tilde{%
C}=2\sqrt{c\sigma  }t,~~~\tilde{C}>0; \\[2mm]
g\left( t\right) =\frac{Ce^{2\sqrt{c\sigma  }t}\sqrt{\sigma
}+\sqrt{\sigma
}}{Ce^{2\sqrt{c\sigma  }t}\sqrt{c}-\sqrt{c}}=\sqrt{\frac{\sigma  }{c }}%
\left( 1+\frac{2}{Ce^{2\sqrt{c\sigma  }t}-1}\right) ,~~C\in\R.
\end{gather*}%
Then%
\begin{equation*}
g\left( 0\right) =\sqrt{\frac{\sigma  }{c }}\left( 1+\frac{2}{C-1%
}\right) ,~~C\in\R.
\end{equation*}%
Let $g_{0}=D\geq\rho_0^\La$. Then since $D>\sqrt{\dfrac{\sigma}{c}}$
we have
\[
C=\frac{2}{D\sqrt{\dfrac{c}{\sigma}}-1}+1>1
\]
and
\[
Ce^{2\sqrt{c\sigma  }t}-1\geq C-1=\frac{2}{D\sqrt{\dfrac{c}{\sigma}}-1}>0.
\]
As a result,%
\begin{equation*}
\rho _{t}^\La\leq g(t) \leq \sqrt{\frac{\sigma  }{c }}\left( 1+\frac{2}{C-1 }%
\right) =D
\end{equation*}
for any $t>0$ and for any $\La\in\{\La_k,k\in\N\}$.
The statement is proved.
\end{proof}

\begin{corollary}\label{trinv1}
Under conditions of Theorem~\ref{thm-sm} in the translation
invariant case we have that $k_0^{(1)}\leq D$ implies $k_t^{(1)}\leq
D$.
\end{corollary}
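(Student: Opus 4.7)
The plan is essentially a direct reduction to Theorem~\ref{thm-sm}, because in the translation invariant setting the average density $\rho_t^\La$ coincides with the constant value of $k_t^{(1)}$ for every window $\La$.

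First I would observe that the pre-generator $L_{a,\sigma}$ is invariant under spatial translations: the death rate $\sum_{y\in\ga\setminus x}a(x-y)$ depends only on differences of points (since $a$ is even and acts through its kernel form) and the birth rate $\sigma$ is a constant. Hence, if the initial measure $\mu_0$ is translation invariant, the evolved measures $\mu_t$ remain translation invariant for all $t\geq 0$, and therefore each $k_t^{(1)}$ is a constant function of $x\in\R^d$.

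Given this, for any bounded open $\La$ we have
\[
\rho_t^\La = \frac{1}{|\La|}\int_\La k_t^{(1)}(x)\,dx = k_t^{(1)}.
\]
In particular, the assumption $k_0^{(1)}\leq D$ immediately yields $\rho_0^{\La_k}\leq D$ for every $k\in\N$. Since we are working under the conditions of Theorem~\ref{thm-sm}, in particular $D>\sqrt{\sigma/c}$, that theorem applies and gives $\rho_t^{\La_k}\leq D$ for all $t>0$ and all $k\in\N$. Using $\rho_t^{\La_k}=k_t^{(1)}$ once more produces the desired bound $k_t^{(1)}\leq D$.

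There is no genuine obstacle here; the only point requiring a brief comment is the preservation of translation invariance by the dynamics, which is standard because $L_{a,\sigma}$ commutes with translations. Everything else is a one-line identification of $\rho_t^\La$ with $k_t^{(1)}$ followed by a direct appeal to Theorem~\ref{thm-sm}.
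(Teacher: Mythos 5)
Your proposal is correct and is exactly the argument the paper intends (the corollary is stated without proof precisely because, in the translation invariant case, $k_t^{(1)}$ is constant in $x$ so $\rho_t^{\La_k}=k_t^{(1)}$ and Theorem~\ref{thm-sm} applies directly). Your remark on the preservation of translation invariance by the dynamics is a reasonable and harmless addition, implicitly assumed in the paper.
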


At the end we consider a simple estimate for the second-order
correlation function. Let us suppose that
\begin{equation*}
a\left( u\right) >0, \quad u\in\X.
\end{equation*}%
Then in the translation invariant case the following estimate holds
\begin{align*}
k_{t}^{\left( 2\right) }\left( u\right) &\leq e^{-2a\left( u\right)
t}k_{0}^{\left( 2\right) }\left( u\right) +2\sigma  \int_{0}^{t}e^{-2a\left(
u\right) \left( t-\tau \right) }k_{\tau }^{\left( 1\right) }d\tau \\
&\leq e^{-2a\left( u\right) t}k_{0}^{\left( 2\right) }\left(
u\right) +2\sigma  D \int_{0}^{t}e^{-2a\left( u\right)
\left( t-\tau \right) }d\tau \\
&=e^{-2a\left( u\right) t}k_{0}^{\left( 2\right) }\left( u\right) +\frac{%
\sigma  D }{a\left( u\right) }\left( 1-e^{-2a\left( u\right)
t}\right)
\end{align*}

We have two possible estimates%
\begin{align}
k_{t}^{\left( 2\right) }(x-y)\leq e^{-2a\left( x-y\right)
t}k_{0}^{\left(
2\right) }\left( x-y\right) +\frac{\sigma  D }{ a\left( x-y\right) } \label{estshortdist}\\
\intertext{and} k_{t}^{\left( 2\right) }(x-y)\leq e^{-2a\left(
x-y\right) t}k_{0}^{\left( 2\right) }\left( x-y\right) +C\sigma  D
t.\label{estlongdist}
\end{align}

\textbf{Acknowledgments}. The financial support of DFG through the
SFB 701 (Bielefeld University) and German-Ukrainian Project 436 UKR
113/94 is gratefully acknowledged. This work was partially supported
by FCT, POCI2010, FEDER.

\addcontentsline{toc}{section}{References}

\end{document}